\documentclass[12pt]{article}
\usepackage{geometry}
\usepackage{setspace}
\usepackage{amssymb,amsmath, amsthm, bbm, amsfonts, eurosym, graphicx, caption, color, setspace, sectsty, comment, caption, pdflscape, array, hyperref, caption, subcaption, soul, xcolor, tikz-network, natbib, enumitem, booktabs}

\usepackage{algorithm}
\usepackage{algorithmic}

\hypersetup{
  colorlinks,
  citecolor= black,
  linkcolor=black,
  urlcolor=black}

\makeatletter
\newcommand{\succprec}{\mathrel{\mathpalette\succ@prec{\succ\prec}}}
\newcommand{\succ@prec}[2]{\succ@@prec#1#2}
\newcommand{\succ@@prec}[3]{%
  \vcenter{\m@th\offinterlineskip
    \sbox\z@{$#1#3$}%
    \hbox{$#1#2$}\kern-0.4\ht\z@\box\z@
  }%
}
\makeatother

\usepackage[hang, flushmargin]{footmisc}

\DeclareMathOperator{\E}{\mathbb{E}}

\newtheorem{theorem}{Theorem}

\newtheorem{example}{Example}

\newtheorem{proposition}{Proposition}

\setcitestyle{round}

\begin{document}

\title{Making Serial Dictatorships Fair\thanks{I thank Federico Echenique, Peter Hull, Ravi Jagadeesan, Teddy Mekonnen, Bobby Pakzad-Hurson, Roberto Serrano, Neil Takhral, Olivier Tercieux, Rajiv Vohra, Bumin Yenmez, and participants at the Brown Theory Lunch for helpful comments.}
\author{Adam Hamdan\thanks{Department of Economics, Brown University, \texttt{\href{mailto:adam_hamdan@brown.edu}{adam\_hamdan@brown.edu}}}}}
\maketitle

\begin{abstract}
    In priority-based matching, serial dictatorship (SD) is simple, strategyproof, and Pareto efficient, but not free of justified envy (i.e. fair). This paper studies how to optimally order agents in SD based on their priorities to minimize justified envy. I show that this problem can be formulated in terms of rank aggregation and identify a novel connection with the social choice literature: for any distribution of agents' preferences and any capacity structure, the serial order that minimizes the expected number of justified envy cases coincides with the (weighted) Kemeny ranking of agents' priorities \citep{kemeny1959mathematics}. 
    
    \medskip
    \noindent \textbf{Keywords:} Serial dictatorship, priority-based matching, justified envy, rank aggregation, social choice, Kemeny's rule.

    \medskip
    \noindent \textbf{JEL Codes:} C78, D47, D61, D71.
\end{abstract}


\newpage
\section{Introduction}

In priority-based matching problems---such as school choice, public housing allocation, and organ assignment---a set of objects must be allocated to agents in a way that reflects each agent's preferences and respects each object's priority ordering over agents. Two normative principles naturally arise. \textit{Efficiency} requires that no agent can be made better off without making another agent worse off. \textit{Fairness} requires the elimination of justified envy: no agent should prefer another agent's assignment while holding higher priority for that object. Unfortunately, there is no allocation mechanism that satisfies both principles simultaneously \citep{roth1982economics, abdulkadirouglu2003school}, thus highlighting a critical tradeoff between efficiency and fairness.
 
Among mechanisms that are efficient, one of the most prominent in practice is serial dictatorship (SD), along with its non-deterministic counterpart, random serial dictatorship (RSD) \citep{abdulkadirouglu1998random}.\footnote{Variants of SD are used to distribute college dormitories \citep{chen2002improving}, to allocate school seats \citep{abdulkadirouglu2017regression}, to match bureaucrats with their preferred postings \citep{khan2019making}, to assign legislative positions based on priority \citep{sonmez2024constitutional}, to match cadets with their army branches of choice \citep{sonmez2013matching, greenberg2024redesigning}, and to allocate undergraduate courses \citep{kornbluth2024undergraduate}.} In SD, agents are first ordered arbitrarily (or uniformly at random), and in turn, choose their favorite object among the options that remain. In addition to being efficient, SD is strategyproof (participants have an incentive to truthfully reveal their preferences), simple to implement, and has been characterized as uniquely optimal in a host of allocation problems \citep{svensson1999strategy, papai2001strategyproof, ehlers2003coalitional, bade2015serial, pycia2024random}. Crucially, however, since the order in which agents are matched neglects the objects' priorities, the matching produced by SD is generally not free of justified envy. This shortcoming serves as one of the primary motivations in the seminal  work of
\cite{abdulkadirouglu2003school} on school choice: 
\begin{quote}
    This mechanism [RSD] is not only \textit{Pareto efficient}, but also \textit{strategyproof} (i.e., it cannot be manipulated by misrepresenting preferences), and it can accommodate any hierarchy of seniorities. So why not use the same mechanism to allocate school seats to students? The key difficulty with this approach is the following: Based on state and local laws, the priority ordering of a student can be different at different schools.
\end{quote}

Although SD does not eliminate justified envy in settings like school choice, a planner may still wish to use it for its other desirable properties. If so, the natural question is: how can she make SD as fair as possible? In particular, based on the schools' priorities, how should agents be ordered in SD to minimize justified envy? 

The ability to choose a serial order as a function of schools’ priorities rests on the fact that those priorities are dictated by law, and cannot, therefore, be manipulated.\footnote{This would not be true in a college admissions model, where both student and college preferences are subject to misreporting. See \cite{erdil2008s}: ``The important difference between the two models is that in school choice, the priority rankings are determined by local (state or city) laws and education policies, and do not reflect the school preferences, whereas in the college admissions model these rankings correspond to college preferences.''} By the same logic, however, an agent's own position in SD cannot depend on his self-reported preferences without compromising strategyproofness. To circumvent this, the planner can commit not to condition her choice of serial order on agents' preferences. Instead, she takes the distribution of preferences as given, and compares candidate serial orders based on the expected number of justified envy cases they generate. What remains, in essence, is a rank aggregation problem, where the goal is to combine multiple object-specific priority rankings into a single ranking of agents.

The central insight of this paper establishes a novel connection between justified envy minimization in SD and a classical concept from the rank aggregation literature known as the \textit{Kemeny ranking}---the permutation that minimizes the sum of pairwise disagreements from a profile of individual rankings \citep{kemeny1959mathematics}.\footnote{For a characterization of Kemeny's rule in social choice, see \cite{young1978consistent}.} Specifically, my first main result pins down conditions under which this connection is exact: if preferences are identical across agents and uniformly distributed, and objects have unit capacities, the serial order that minimizes the expected number of justified envy cases \textit{is} the Kemeny ranking of agents' priorities (\hyperref[theorem1]{Theorem 1}). 

The intuition for \hyperref[theorem1]{Theorem 1} is as follows. Under identical preferences, envy is guaranteed, and the only remaining avenue is to limit the ``justified'' component of all justified envy cases. Moreover, when preferences are drawn in such a way that every object is equally likely to be envied by every pair of agents, it is optimal to value each object's priority ranking of those agents equally, and that is precisely what the Kemeny ranking achieves. 

I then extend the analysis to settings with non-identical preferences, non-uniform preference distributions, and non-unit capacities, and show that the optimal serial order is always a weighted Kemeny ranking that accounts for how likely each priority violation is to create justified envy.\footnote{For a recent characterization of weighted Kemeny rankings, see \cite{lederer2024bivariate}.} To illustrate this, I relax the assumptions in \hyperref[theorem1]{Theorem 1} one at a time and derive closed-form expressions for the weights.

First, I consider what happens when the identical preference ranking is no longer uniformly distributed. Naturally, if some objects are more likely to appear at the top of the preference ranking, assignments to these objects are likely to occur earlier in the SD sequence. In response, the optimal serial order should incorporate the top-portion of these objects' priority rankings more than others'. Concretely, this amounts to obtaining a weighted Kemeny ranking, where an object's disagreement at the $t$-th position of the serial order is weighted proportionately to how likely that object is to appear at the $t$-th position of the preference ranking (\hyperref[proposition1]{Proposition 1}). 

Second, I characterize the optimal serial order when agents' preferences are no longer identical, but independently drawn. In such cases, an agent is not guaranteed to feel envy towards an earlier agent's assignment. Instead, the probability of envy is increasing in the gap between agents' positions. Once again, the optimal SD mechanism solves a weighted version of Kemeny's problem that assigns more importance to disagreements towards the bottom of the ranking, and accounts for the severity of disagreements at each agent pair (\hyperref[proposition2]{Proposition 2}).

Finally, I allow objects to have non-unit capacities (e.g. schools have multiple seats). The key complication here is that the probability of envy may not be strictly increasing in agents' positions. For example, if every object has at least two copies, the second dictator will never be envious of the first dictator's assignment. Accordingly, the optimal serial order takes the form of a Kemeny ranking with object- and position-specific weights that jointly capture (i) the likelihood of matching with each object, and (ii) the likelihood of reaching full capacity for that object at different steps of the SD matching (\hyperref[proposition3]{Proposition 3}). 

Taken together, these results illustrate that minimizing justified envy in SD consistently reduces to a weighted Kemeny ranking problem.

\bigskip
\noindent \textbf{Related Literature.} This paper contributes to the vast literature on priority-based allocation, starting with the work of \cite{abdulkadirouglu2003school}. Among the earliest attempts to reconcile fairness with efficiency in these problems is work by \cite{ergin2002efficient} characterizing priority domains for which the deferred acceptance (DA) mechanism is both stable and efficient. Another strand of the literature has considered relaxing the goal of no justified envy, particularly in the school choice context (see \cite{kesten2010school, dur2019school, ehlers2020legal, reny2022efficient}). A more recent approach---initiated by \cite{abdulkadiroǧlu2020efficiency}---is to ask which mechanism is justified envy minimal within a class of rules that guarantee efficiency \citep{abdulkadiroǧlu2020efficiency, abdulkadiroglu2020efficient, dougan2022robust, kwon2026justified, afacan2025improving}. The current paper answers this question for the class of strategyproof SD mechanisms. 

Aside from fairness, one dimension where many mechanisms fall short is strategic simplicity. In this regard, SD stands apart from other mechanisms like DA and top trading cycles (TTC) by being \textit{obviously strategyproof}: even agents who cannot engage in contingent reasoning have no incentive to misrepresent their preferences \citep{li2017obviously, ashlagi2018stable, troyan2019obviously, pakzad2023stable}.\footnote{In fact, SD satisfies the stronger notion of \textit{strongly obviously strategyproof} \citep{pycia2023theory}, and is minimally complex for strategyproof allocation without transfers \citep{nagel2023measure}.} Practically, simple mechanisms limit the inequity that arises from uneven strategic sophistication among participants \citep{pathak2008leveling}. For this reason, my results offer intuitive fairness improvements to commonly-used SD mechanisms, such as those with an arbitrary or randomly drawn order. 

This paper also adds to recent work analyzing optimal SD mechanisms. This includes \cite{abdulkadiroglu2020efficient}, who characterize fair sequential dictatorships using a set-inclusion notion of justified envy pairs, and \cite{caragiannis2024optimizing}, who optimize serial orders for utilitarian welfare rather than fairness. Unlike both of these papers, I use cardinal measures of justified envy and adopt a rank aggregation perspective. By doing so, this paper is the first to recognize the applicability of Kemeny's rule to matching problems.

Conceptually, my approach is similar in spirit to a contemporaneous paper by \cite{aryal2025desirable} which asks: given a profile of student preferences and an \textit{existing} matching $\mu$, how could we infer the true ranking of schools' desirability? Mine is a fundamentally different question: how should we aggregate schools' priorities into a single ranking of students when the goal is to \textit{determine} a fair matching through SD? 

\bigskip
\noindent \textbf{Outline.} The rest of the paper is organized as follows. \hyperref[section2]{Section 2} presents a standard model of priority-based matching. \hyperref[section3]{Section 3} offers motivating examples. \hyperref[section4]{Section 4} covers the main result and its extensions. \hyperref[section5]{Section 5} concludes.


\section{Model}\label{section2}

A priority-based matching problem (or ``school choice'' problem) is made up of a tuple $(I, S, P, \succ)$, where $I = \{i_1, \ldots, i_n\}$ is a finite set of agents (students), $S = \{s_1, \ldots, s_m\}$ is a finite set of objects (schools), $P = (P_{i_1}, \ldots, P_{i_n})$ is a profile of strict agent preferences (complete, transitive, and irreflexive relation) over $S \cup \{i\}$ for each $i$, and  $\succ = (\succ_{s_1}, \ldots, \succ_{s_m})$ is a profile of strict priority orderings over agents. We write $a P_i b$ to say that agent $i$ prefers $a$ over $b$, and $i \succ_s j$ to say that $i$ has higher priority than $j$ for $s$. Unless specified otherwise, it is assumed that objects have unit capacities and that there are as many agents as there are objects. For expositional convenience, I assume that all objects are acceptable to agents (i.e. $s P_i \{i\}$ for all $s$ and $i$).

Under these assumptions, a matching is a one-to-one mapping $\mu: I \to S$. A matching $\mu$ Pareto dominates another matching $\mu'$ if $\mu(i) P_i \mu'(i)$ for some $i \in I$ and $\mu(i) R_i \mu'(i)$ for all $i \in I$, where $R_i$ denotes the ``at least as good as'' relation associated with $P_i$. A matching $\mu$ is Pareto efficient if no other matching Pareto dominates it. Given a matching $\mu$, we say that agent $i$ has justified envy towards a match $s = \mu(j)$ if $i$ prefers $s$ to his own match $\mu(i)$ and $s$ assigns higher priority to $i$ than $j$. In such cases, the triplet $(i, (j,s))$ constitutes a case of justified envy (or blocking triplet), and the pair $(i,s)$ constitutes a justified envy pair (or blocking pair) comprised of a blocking agent $i$ and blocking object $s$. 

A mechanism $\varphi$ selects a matching for each problem $(I, S, P, \succ)$, where $\varphi(P, \succ)$ denotes the matching $\mu$ selected by $\varphi$, and $\varphi_i(P, \succ)$ denotes the object assigned to agent $i$ under $\varphi$. A mechanism is Pareto efficient if it only selects Pareto efficient matchings, and it is free of justified envy if it only selects matchings with no cases of justified envy. A mechanism is strategyproof if agents cannot benefit from misreporting their true preferences. Formally, $\varphi$ is strategyproof if there is no problem featuring an agent $i$ and an alternative preference $P_i'$ such that $\varphi_i(P_i', P_{-i}, \succ) P_i \varphi_i(P, \succ)$.

The serial dictatorship (SD) mechanism operates as follows: given a preference profile $P$, and an ordering of agents $\rho$, match the highest-ranked agent, $\rho(1)$, to his favorite object in $S$, then match the second highest-ranked agent, $\rho(2)$, to his favorite object in $S \setminus \mu(\rho(1))$, and so on. Such an ordering $\rho$ is called a serial order. Note that SD is both Pareto efficient and strategyproof for any choice of $\rho$. For ease of notation, let $s(t) = \mu(\rho(t))$ denote the object assigned to the agent in position $t$ of SD, and let $S_t = S \setminus \{s(1), \ldots, s(t-1)\}$ denote the set of unmatched objects at step $t$ of SD. Finally, let $\mu_\rho^{\text{SD}}(P, \succ)$ denote the matching prescribed by SD using $\rho$ for problem $(I, S, P, \succ)$, and let $N(\mu_\rho^{\text{SD}} (P, \succ))$ denote the number of justified envy cases arising in $\mu_\rho^{\text{SD}}(P, \succ)$. We say that $\rho$ is \textit{optimal} if it minimizes $\E_P N(\mu_{\rho}^{SD}(P, \succ))$, where the expectation is taken with respect to the distribution of preferences.


\section{Examples}\label{section3}

In a generic problem $(I, S, P, \succ)$, the amount of justified envy in $\mu_\rho^{\text{SD}}(P,\succ)$ jointly depends on the preferences $P$, the priorities $\succ$, and the choice of serial order $\rho$. 

If the planner were to select $\rho$ as a function of both preferences and priorities, there are several avenues through which she could limit justified envy. For example, if at some step $t$ of SD there is an agent $i$ whose most-preferred object among the remaining options is everyone else's least-preferred object, then selecting $\rho(t) = i$ is weakly optimal to reduce justified envy, simply because no future agent will be envious of $i$'s match. Another way to reduce justified envy is to look for mutually top-ranked agent-object pairs: if at any step $t$ of SD, there is an agent $i$ whose most-preferred object among the remaining options also prioritizes him the most among the remaining agents, then selecting $\rho(t) = i$ generates a match with no subsequent justified envy.\footnote{Notions related to mutually top-ranked pairs frequently appear in the matching literature, including in \cite{reny2021simple, fernandez2022centralized, eeckhout2000uniqueness, clark2006uniqueness}.}

The main issue with these approaches is that the resulting SD mechanism would no longer be strategyproof, since agents could misreport to obtain an earlier position in the serial order.\footnote{The argument is similar to why the Immediate Acceptance (Boston) mechanism is not strategyproof \citep{abdulkadirouglu2003school}.} How then should the serial order be determined if the planner commits not to condition her choice on reported preferences? To see how priorities alone can inform this decision, consider the following situation. 

\begin{example}[Priority Dominance] Let $I = \{1,2,3\}$, $S = \{a,b,c\}$, and let $P$ be unknown.

\begin{center}
\noindent \begin{tabular}{c c c | c c c}
    $P_1$ & $P_2$ & $P_3$ & $\succ_a$ & $\succ_b$ & $\succ_c$ \\
    \hline 
    - & - & - & 2 & 2 & 2 \\
    - & - & - & 3 & 1 & 1 \\
    - & - & - & 1 & 3 & 3 \\
\end{tabular}
\end{center}
\end{example}
\bigskip

Every object in this example prioritizes agent $2$ the most. Consequently, any serial order in which agent $2$ is \textit{not} the first dictator runs the risk of causing agent $2$ to feel justified envy. Conversely, choosing $\rho(1) = 2$ guarantees that no later agent can feel justified envy towards agent $2$'s match: even if they envy agent $2$'s match with some object $s$, they would have a lower priority for $s$. Put differently, for any object $s$ that agent $2$ reports as his most-preferred, $(2,s)$ will constitute a mutually top-ranked pair. Therefore, based solely on objects' priorities, it is clear that $\rho(1) = 2$ weakly dominates any alternative $\rho'(1)$. 

A similar idea appears in \cite{abdulkadiroglu2020efficient}, who propose the notion of \textit{priority dominance} to determine $\rho$. At any step of the serial dictatorship, call an unmatched agent $i$ priority dominated by $j$ if all unmatched objects rank $j$ weakly higher than $i$. Now run SD as follows: at any step $t$ of SD, select any agent who is not priority dominated as $\rho(t)$, and match him to his most-preferred acceptable object. When priorities are strict, \cite{abdulkadiroglu2020efficient} show that this procedure is justified envy-minimal among all SD mechanisms, from the perspective of justified envy pair inclusion. 

There are a few reasons why priority dominance is insufficient in the current setting. First, as stated above, the results in \cite{abdulkadiroglu2020efficient} are derived in terms of justified envy pair inclusion, and do not carry over for cardinal comparisons such as counting the number of justified envy cases, or justified envy pairs. In the same vein, priority dominance is too demanding of a priority-comparison and will usually not be met, thereby failing to narrow down the set of candidate orderings. In the previous example, priority dominance pins down $\rho(1) = 2$, but may not provide guidance for selecting $\rho(2), \rho(3)$. To emphasize this point further, the next example depicts a situation where priority dominance fails to rule out any agent for $\rho(1)$. 

\begin{example}[Unknown Preferences] Let $I = \{1,2,3,4,5\}$, $S = \{a,b,c,d,e\}$, and let $P$ be unknown.

\noindent 
\begin{center}
\begin{tabular}{c c c c c| c c c c c}
    $P_1$ & $P_2$ & $P_3$ & $P_4$ & $P_5$ & $\succ_a$ & $\succ_b$ & $\succ_c$ & $\succ_d$ & $\succ_e$ \\
    \hline 
    - & - & - & - & - & 1 & 5 & 2 & 2 & 4\\
    - & - & - & - & - & 4 & 1 & 3 & 3 & 2\\
    - & - & - & - & - & 3 & 3 & 4 & 1 & 1\\
    - & - & - & - & - & 2 & 4 & 1 & 4 & 5\\
    - & - & - & - & - & 5 & 2 & 5 & 5 & 3\\
\end{tabular}
\end{center}
\end{example}
\bigskip

In order to minimize justified envy, the following possibility should be considered when selecting an entry $t$ in $\rho$: if $\rho(t)$ picks an object that does not rank him the highest among the remaining unmatched agents at step $t$, there is a chance that a later agent with higher priority will experience justified envy toward his match. In this example, the choice $\rho(1) = 2$ results in no justified envy if agent $2$ picks $c$ or $d$ (where he is the highest-priority agent), but is likely to create justified envy if agent $2$ picks $b$ (where he is the lowest-priority agent). 

It follows that a desirable selection procedure should mitigate such risks by ranking agents in descending order of priority, and the challenge lies precisely in determining how to compare agents on the basis of priorities. To appreciate the range of options available, consider applying the following popular rank aggregation methods to the example above (breaking ties in favor of earlier-indices).

\bigskip
\noindent \textbf{Plurality rule.} Place the agent who is ranked first most often at the top of the ranking, and iterate. In our example, this yields $\rho^{\text{Plurality}} = 2,3,1,4,5$.

\medskip
\noindent \textbf{Instant-Runoff rule.} Place the agent who is ranked first least often at the bottom of the ranking, and iterate. In our example, this yields $\rho^{\text{Instant-Runoff}} = 2,1,4,5,3$.

\medskip
\noindent \textbf{Coombs' rule.} Place the agent who is ranked last most often at the bottom of the ranking, and iterate. In our example, this yields $\rho^{\text{Coombs}} = 1,3,4,2,5$.

\medskip
\noindent \textbf{Copeland's rule.} Rank agents in descending order of how many other agents they defeat by a majority in pairwise comparison, $c(i) = \sum_{j \in I} \mathbbm{1}\left[\sum_{s \in S} \mathbbm{1}[i \succ_s j] > |S|/2 ] \right]$. In our example, this yields $\rho^{\text{Copeland}} = 1,2,3,4,5$.

\medskip
\noindent \textbf{Borda's rule.} Rank agents in descending order of their Borda scores, summing the number of agents ranked below them across all objects, $b(i) = \sum_{s \in S} \sum_{j \in I} \mathbbm{1}[i \succ_s j]$. In our example, this yields $\rho^{\text{Borda}} = 1,2,4,3,5$.

\medskip
\noindent \textbf{Kemeny's rule.} Select a ranking $\rho$ that minimizes the sum of pairwise disagreements across all the objects' priority rankings, $\rho \in \arg \min_{\rho \in \mathcal{R}_n} \sum_{s \in S} \sum_{t < t'} \mathbbm{1}[\rho(t') \succ_s \rho(t)]$. In our example, this yields $\rho^{\text{Kemeny}} = 2,1,3,4,5$.

\bigskip
When applied to the example above, each one of these rank aggregation methods produces a different ranking of agents. Notice also that all of these methods (correctly) agree with priority dominance in the case where all objects have the same priority ranking over agents.  

\section{Results}\label{section4}

The objective of this paper is to identify the best way of ordering agents in SD based on their priorities. Since the planner commits not to condition her choice on agents' reported preferences, she takes the distribution of preferences as given and compares the expected number of justified envy cases associated with each candidate order. The optimal serial order jointly depends on the priorities and the distribution of preferences. 

\bigskip
\noindent \textbf{Baseline.} As a focal benchmark, I derive the optimal serial order when agents share an identical preference ranking that is drawn uniformly at random, and when objects have unit capacities. This serves as a useful baseline for two reasons. First, the case of identical ordinal preferences frequently appears in the literature, partially driven by the fact that preferences are strongly correlated in settings like school choice (see \cite{abdulkadirouglu2011resolving, burgess2015parents}). Second, it makes the main intuition transparent. With identical preferences, the planner no longer needs to worry about the likelihood of envy (which is now guaranteed at each step of SD), and can focus on limiting the ``justified'' portion of all justified envy cases.

Under these assumptions, it is shown that the optimal SD follows Kemeny's rule. 

\begin{theorem}\label{theorem1}
    If preferences are identical across agents and uniformly distributed, the serial order that minimizes the expected number of justified envy cases is the Kemeny ranking of agents' priorities.
\end{theorem}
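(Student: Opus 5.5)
The plan is to write the expected number of justified envy cases as a linear function of the serial order's pairwise disagreements with the priorities, and show that every disagreement carries the same constant weight.

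Fix a serial order $\rho$ and suppose all agents share a common preference ranking $P_0$, drawn uniformly from the $m!$ strict rankings of $S$ (here $m=n$). Under SD with identical preferences, the agent in position $t$ receives the $t$-th ranked object of $P_0$; write $s(t)$ for it. Every agent $\rho(t')$ with $t'>t$ strictly prefers $s(t)$ to his own assignment $s(t')$. Agents earlier in the order never envy later agents. Hence $(\rho(t'),(\rho(t),s(t)))$ is a case of justified envy if and only if $t<t'$ and $\rho(t') \succ_{s(t)} \rho(t)$. This gives the pathwise identity
\begin{equation*}
N(\mu_\rho^{\text{SD}}(P,\succ)) = \sum_{t<t'} \mathbbm{1}\left[\rho(t') \succ_{s(t)} \rho(t)\right].
\end{equation*}

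Next I take expectations over $P_0$. Because $P_0$ is uniform, the object in position $t$ of the common ranking, $s(t)$, is uniformly distributed on $S$ for every $t$. By linearity of expectation,
\begin{equation*}
\E_P N(\mu_\rho^{\text{SD}}(P,\succ)) = \sum_{t<t'} \frac{1}{m}\sum_{s\in S} \mathbbm{1}\left[\rho(t') \succ_{s} \rho(t)\right] = \frac{1}{m}\sum_{s\in S}\sum_{t<t'}\mathbbm{1}\left[\rho(t') \succ_s \rho(t)\right].
\end{equation*}
The right-hand side is $1/m$ times the Kemeny objective from Section 3, namely the total number of pairwise disagreements between $\rho$ and the profile $(\succ_s)_{s\in S}$. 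Since $1/m$ is a positive constant independent of $\rho$, the minimizers of expected justified envy coincide with the Kemeny rankings. This includes the case of ties: the set of optimal orders equals the set of Kemeny rankings.

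The main point requiring care is the pathwise identity. I must check that, with unit capacities and identical preferences, envy runs exactly from every later agent to every earlier agent and never the other way, so that justification reduces to a single priority comparison at the object $s(t)$. The remaining subtlety is that only the marginal distribution of $s(t)$ matters, even though the objects $s(1),\dots,s(n)$ are jointly dependent. Linearity of expectation removes this dependence, so the uniformity of each marginal is all that is used. This observation also points to the weighted extension: replacing $1/m$ by $\Pr[s(t)=s]$ yields position-specific weights.
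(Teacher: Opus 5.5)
Your proof is correct and follows essentially the same route as the paper. Both write $N$ as a sum over pairs $t<t'$ of envy-times-justification indicators, use that envy is certain under identical preferences and that $s(t)$ is uniform on $S$, and reduce the expectation to $\frac{1}{n}\sum_{s\in S}\mathcal{K}(\rho,\succ_s)$. The only difference is cosmetic: you drop the envy indicator pathwise before taking expectations, whereas the paper conditions on $s(t)=s$ and factors the conditional probability.
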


\begin{proof}
     Given a problem $(I, S, P, \succ)$, recall that $N\left(\mu_\rho^{\text{SD}}(P, \succ)\right)$ denotes the number of justified envy cases resulting from SD with the order $\rho$, which can be written as
    \begin{align*}
        N\left(\mu_\rho^{\text{SD}}(P, \succ)\right)
        &= \sum_{t = 1}^n \sum_{t' = t + 1}^{n} \underbrace{\mathbbm{1}[ s(t) P_{\rho(t')} s(t')]}_{\text{envy}} \cdot \underbrace{\mathbbm{1}[\rho(t') \succ_{s(t)} \rho(t)]}_{\text{justified}} 
    \end{align*}
    \allowdisplaybreaks
    Taking the expectation of $N\left(\mu_\rho^{\text{SD}}(P, \succ)\right)$ over all identical preference profiles yields
    \begin{align*}
        &\E_P \left[ N\left(\mu_\rho^{\text{SD}}(P, \succ)\right) \right] \\
        &= \sum_{t = 1}^n \sum_{t' = t+1}^{n} \E_P \left[ \mathbbm{1}[ s(t) P_{\rho(t')} s(t') ] \cdot \mathbbm{1}[\rho(t') \succ_{s(t)} \rho(t) ] \right] \\
        &= \sum_{t = 1}^n \sum_{t' = t+1}^{n} \mathbb{P} \left[ [s(t) P_{\rho(t')} s(t') ] \wedge [\rho(t') \succ_{s(t)} \rho(t) ] \right] \\
        &= \sum_{t = 1}^n \sum_{t' = t+1}^{n} \sum_{s \in S} \mathbb{P} \left[ [s(t) P_{\rho(t')} s(t') ] \wedge [\rho(t') \succ_{s(t)} \rho(t) ] | s(t) = s \right] \cdot \mathbb{P}[s(t) = s]  \\
        &= \sum_{t = 1}^n \sum_{t' = t+1}^{n} \sum_{s \in S} \mathbb{P} \left[ s(t) P_{\rho(t')} s(t') | s(t) = s \right] \cdot \mathbb{P} \left[\rho(t') \succ_{s(t)} \rho(t) | s(t) = s \right] \cdot \frac{1}{n}
    \end{align*}
    
    To see that $\mathbb{P}[s(t) = s] = \frac{1}{n}$, note that an agent $\rho(t)$ will match with an object $s \in S$ if and only if $s$ is the $t$-th top-ranked object in the collective preference ranking, which occurs with probability $\frac{(n - 1)!}{n!} = \frac{1}{n}$. It is also clear that if preferences are identical, then $\mathbb{P} \left[ s(t) P_{\rho(t')} s(t') | s(t) = s \right] = 1$ and the optimal serial order is characterized by
    \begin{align*}
        \operatorname*{argmin}_{\rho \in \mathcal{R}_n} \E_P \left[ N\left(\mu_\rho^{\text{SD}}(P, \succ)\right) \right]
        &= \operatorname*{argmin}_{\rho \in \mathcal{R}_n} \frac{1}{n} \sum_{s \in S} \sum_{t < t'} \mathbbm{1}[\rho(t') \succ_s \rho(t)] \\
        &= \operatorname*{argmin}_{\rho \in \mathcal{R}_n} \sum_{s \in S} \mathcal{K}(\rho, \succ_s)
    \end{align*}

    Where $\mathcal{K}(\rho, \succ_s)$ is the Kendall tau distance (number of pairwise disagreements) between $\rho$ and $\succ_s$, and $\operatorname*{argmin}_{\rho \in \mathcal{R}_n} \sum_{s \in S} \mathcal{K}(\rho, \succ_s)$ is the set of Kemeny rankings.
\end{proof}

For context, the Kemeny method was first introduced by \cite{kemeny1959mathematics} as a disciplined way of obtaining a consensus ranking from a profile of individual rankings. It was later characterized by \cite{young1978consistent} as the unique rank aggregation method satisfying the Condorcet principle (suitably generalized to rankings) together with classical notions of consistency and neutrality. Since then, Kemeny's rule has gained a considerable amount of interest outside of social choice---enjoying applications in revealed preference theory \citep{chambers2021recovering}, computer science \citep{dwork2001rank}, and computational biology \citep{lin2010rank}.

In the current setting of priority-based matching, a new application for Kemeny's rule is uncovered. Under the conditions in \hyperref[theorem1]{Theorem 1}, every triplet $(\rho(t'),(\rho(t),s))$ is equally likely to form a justified envy case. As a result, it is optimal to incorporate each object's pairwise ranking (of the form $\rho(t') \succ_s \rho(t)$) equally---or equivalently, to penalize each violation of such a pairwise ranking equally---following Kemeny's rule. Under more general conditions, each violation would have to be weighted by its likelihood of creating justified envy, denoted by $\mathbb{P} \left[ s(t)P_{\rho(t')} s(t') | s(t) = s \right] \cdot \mathbb{P} \left[ s(t) = s\right]$.

In the rest of the paper, I relax the assumptions in \hyperref[theorem1]{Theorem 1} one at a time to illustrate that the optimal serial order generally consists of a weighted Kemeny ranking. All remaining proofs are left to the Appendix.

\bigskip
\noindent \textbf{Non-uniformly distributed preferences.} In many settings, the planner may have some knowledge about which objects are more desirable than others. Maintaining the assumption that agents share an identical preference ranking, this section asks how the optimal serial order changes when that ranking is no longer drawn uniformly at random. Intuitively, when some objects are systematically more likely to appear near the top of the preference ranking, assignments to these objects are likely to occur earlier in the SD sequence, thus generating more subsequent envy. The planner takes this into account when consolidating objects' priority rankings into a single order. 

A convenient way to represent an arbitrary distribution over identical preference rankings is as follows. Let $p(s,t)$ denote the probability that object $s$ occupies position $t$ in the collective preference ranking, such that $\sum_{s \in S} p(s,t) = 1$ for all $t$, and $\sum_{t =1}^n p(s,t) = 1$ for all $s$. To modify \hyperref[theorem1]{Theorem 1}, one simply needs to substitute $p(s,t)$ for the match probability $\mathbb{P}[s(t) = s]$ (which was previously equal to $\frac{1}{n}$). The planner now minimizes a weighted sum of pairwise disagreements, where each disagreement of the form $\rho(t') \succ_s \rho(t)$ (with $t < t'$) is weighted by the probability that $s$ appears in the $t$-th position of the preference ranking. This fact is captured by the following result, whose proof is identical to that of \hyperref[theorem1]{Theorem 1}. 

\begin{proposition}\label{proposition1}
    If preferences are identical across agents and drawn from an arbitrary distribution characterized by probabilities $p(s, t)$, the serial order that minimizes the expected number of justified envy cases is a weighted Kemeny ranking, where each pairwise priority disagreement $\rho(t') \succ_s \rho(t)$ with $t < t'$ receives weight $p(s,t)$.
\end{proposition}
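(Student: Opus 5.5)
The plan is to repeat the decomposition from the proof of Theorem~\ref{theorem1} and change only the step where the match probability $\mathbb{P}[s(t)=s]$ is evaluated. First, write the number of justified envy cases for a fixed serial order $\rho$ and realized profile $P$ as a sum over position pairs:
\begin{align*}
N\left(\mu_\rho^{\text{SD}}(P,\succ)\right)=\sum_{t<t'}\mathbbm{1}[s(t)P_{\rho(t')}s(t')]\cdot\mathbbm{1}[\rho(t')\succ_{s(t)}\rho(t)].
\end{align*}
Take expectations term by term, and condition on the event $s(t)=s$ by the law of total probability, summing over $s\in S$.

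Next, use the identical-preferences assumption in two ways. Let $P^*$ be the common ranking. Under SD, the agent in position $t$ takes the $t$-th ranked object of $P^*$, since the earlier $t-1$ agents have removed exactly the top $t-1$ objects. Hence $s(t)=s$ exactly when $s$ occupies position $t$ in $P^*$, and so $\mathbb{P}[s(t)=s]=p(s,t)$. This is the one substitution relative to Theorem~\ref{theorem1}, where this probability was $\frac1n$.

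For $t<t'$, the object $s(t)$ is ranked above $s(t')$ in $P^*$. Therefore the envy indicator equals one with probability one, whatever the realized ranking.

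Given $s(t)=s$, the justified indicator $\mathbbm{1}[\rho(t')\succ_s\rho(t)]$ is deterministic. It depends only on the fixed priorities and on $\rho$, which by the planner's commitment does not depend on $P$. The expectation therefore reduces to
\begin{align*}
\E_P\left[N\left(\mu_\rho^{\text{SD}}(P,\succ)\right)\right]=\sum_{s\in S}\sum_{t<t'}p(s,t)\,\mathbbm{1}[\rho(t')\succ_s\rho(t)].
\end{align*}
This is exactly the weighted Kendall-tau objective in the statement. Minimizing it over $\rho\in\mathcal{R}_n$ gives, by definition, the weighted Kemeny ranking with weight $p(s,t)$ on each disagreement $\rho(t')\succ_s\rho(t)$ with $t<t'$.

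I do not expect a genuine obstacle. The one point needing care is the conditioning step. In Theorem~\ref{theorem1} the paper factors the joint probability into a product of conditionals, and this works here because both the envy event and the justified event are degenerate once we condition on $s(t)=s$. I will state that explicitly rather than invoke independence. I will also note that the weight attaches to the earlier position $t$ and not to $t'$, since $s$ is the object taken at step $t$ and envied later.
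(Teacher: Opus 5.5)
Your proposal is correct and follows essentially the paper's route: the paper says the proof is identical to that of Theorem~1 with $p(s,t)$ substituted for the match probability $\mathbb{P}[s(t)=s]=\frac{1}{n}$, which is exactly the substitution you make. Your remark that the factorization is justified because, given $s(t)=s$, the envy event is certain and the justified indicator is deterministic (rather than by appeal to independence) is a welcome clarification of a step the paper leaves implicit.
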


A special case of identical but non-uniformly distributed preferences is if agents have a common preference ranking that is known to the planner. In other words, the planner knows that object $s_1$ is every agent's first choice, $s_2$ is every agent's second choice, and so on. In this case, the optimal serial order in \hyperref[proposition1]{Proposition 1} can be obtained using the following intuitive procedure: at any step $t$, let the $t$-th dictator $\rho(t)$ be the highest-priority agent among the remaining agents according to the most-preferred object among the remaining objects. It is easy to see that in this special case, the optimal serial order completely eliminates justified envy. 

\bigskip
\noindent \textbf{Independent preferences.} How does the optimal SD mechanism adapt when agents' preferences are independently distributed? To repeat the proof of \hyperref[theorem1]{Theorem 1}, two things must be considered anew. First, what is the probability that an agent in position $t$ is matched to a specific object $s$? Second, what is the probability that an agent in position $t'$ is envious of an earlier agent $t$'s assignment to some object $s$?

Addressing the first point, it can be shown that under independent and uniformly drawn preferences, every agent remains equally likely to match with every object, irrespective of his position in $\rho$. To see this, note that the first dictator is equally likely to match with every object when his preference ranking is drawn uniformly at random. Next, the second dictator only matches with an object $s$ if it happens to be his first choice in $S_2$ (which occurs with probability $\frac{1}{n - 1}$) and if it was not assigned to the first dictator (which occurs with probability $\frac{n - 1}{n}$), resulting in a probability of $\frac{1}{n}$. Repeating this argument leads to a match probability of $\frac{1}{n}$ for any $\rho(t)$ and any $s$.

Regarding the second point, independence of preferences will non-trivially affect the conditional probability of envy between agents. Recall that with identical preferences, an agent in position $t'$ is guaranteed to feel envy towards any earlier assignment $s(t)$. Now, envy can arise in multiple ways. For example, an agent occupying the third position in $\rho$ may feel envy towards $\rho(1)$ if $\rho(1)$ happens to pick his first most-preferred object in $S$, but he may also feel envy towards $\rho(1)$ if $\rho(1)$ picks his second most-preferred object and $\rho(2)$ subsequently picks his first most-preferred object. To account for this, the optimal serial order follows a weighted Kemeny ranking. 

\begin{proposition}\label{proposition2}
    If preferences are independently and uniformly distributed, the serial order that minimizes the expected number of justified envy cases is a weighted Kemeny ranking, where each pairwise priority disagreement $\rho(t') \succ_s \rho(t)$ with $t < t'$ receives weight $p(t', t)$ given by
    \begin{align*}
        p(t',t) = \frac{1}{|S_t|} \left( 1 + \sum_{k = 1}^{t' - t - 1} \frac{\binom{t' - t - 1}{k} \ k!}{\prod_{i = 1}^k |S_{t + i}|} \right)
    \end{align*}
\end{proposition}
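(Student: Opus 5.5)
The plan is to reuse the decomposition from the proof of Theorem~\ref{theorem1}. As there, write
\[
\E_P N(\mu_\rho^{\text{SD}}) = \sum_{t<t'} \sum_{s \in S} \mathbb{P}\left[s(t) P_{\rho(t')} s(t') \mid s(t)=s\right] \cdot \mathbbm{1}[\rho(t') \succ_s \rho(t)] \cdot \mathbb{P}[s(t)=s].
\]
The justified indicator is deterministic once $s(t)=s$ is fixed, so this factorization holds exactly as before. I will then establish two facts.

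First, $\mathbb{P}[s(t)=s]=\frac{1}{n}$ for every $t$ and $s$. I will prove this by induction on $t$ using the symmetry argument sketched in the text. Conditional on any history, $\rho(t)$'s preference restricted to $S_t$ is uniform and independent of that history. By the induction hypothesis each object is still available with probability $\frac{n-t+1}{n}$, which gives $\frac{n-t+1}{n}\cdot\frac{1}{n-t+1}=\frac{1}{n}$. More cleanly, relabelling objects by a uniformly random permutation leaves the distribution of $P$ invariant, so $s(t)$ is uniform on $S$.

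Second, the conditional envy probability $\mathbb{P}[s(t) P_{\rho(t')} s(t') \mid s(t)=s]$ does not depend on $s$, by the same relabelling symmetry, and equals $p(t',t)$. Given both facts, the objective becomes $\frac{1}{n}\sum_{s}\sum_{t<t'} p(t',t)\,\mathbbm{1}[\rho(t')\succ_s\rho(t)]$, which is exactly the claimed weighted Kemeny problem.

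The main work is computing $p(t',t)$. Agent $\rho(t')$'s preference is independent of the preferences of the agents in positions $t,\dots,t'-1$. Agent $\rho(t')$ envies $s(t)$ if and only if every object he ranks above $s(t)$ within $S_t$ has been taken by one of the intermediate dictators $\rho(t+1),\dots,\rho(t'-1)$, since otherwise he takes such an object and does not prefer $s(t)$. I will condition on $k$, the number of objects of $S_t$ that $\rho(t')$ ranks above $s(t)$. Here $k=0$ has probability $1/|S_t|$ and yields the leading term $1$. For $k\ge 1$, I need the probability that a specified set of $k$ objects is entirely absorbed by the $t'-t-1$ intermediate picks. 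To get this I will choose which $k$ intermediate agents take them ($\binom{t'-t-1}{k}$ ways) and in which order ($k!$ ways). Then, using the fact that conditional on the history each intermediate agent's pick is uniform over the remaining objects, each such specified pick has probability $1/|S_{t+i}|$. Summing over $k$ with the prefactor $1/|S_t|$ coming from the uniform position of $s(t)$ in $\rho(t')$'s ranking of $S_t$ should reproduce the displayed formula.

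I expect this last step to be the obstacle. The index $i$ in $\prod_{i}|S_{t+i}|$ must correspond to the actual steps at which the specified objects are taken, and the remaining intermediate agents must be shown to avoid the specified set. I will first try to organize the count so that it telescopes into the displayed expression. If exact bookkeeping fails, I will at least verify the formula for $t'=t+1$ and $t'=t+2$ by direct computation and then present the general case by the same conditioning. Crucially, for the optimality claim only the $s$-independence of the envy probability matters, and that follows from symmetry alone, whatever the closed form.
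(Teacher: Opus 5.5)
\textbf{Gap: the closed-form weight is left unproven.} Your route is the paper's. You use the same decomposition, the same fact $\mathbb{P}[s(t)=s]=\frac1n$, and the same conditioning on the number $k$ of objects of $S_t$ that $\rho(t')$ ranks above $s(t)$, with the same $\binom{t'-t-1}{k}\,k!$ count. Your relabelling argument is a better reason for the $s$-independence of the envy probability than the paper's remark that conditioning ``can be dropped since $\mathbb{P}[s(t)=s]=\frac1n$.'' But you stop exactly at the step that carries the proposition's content, and your fallback does not close it. The claim that only $s$-independence matters ``whatever the closed form'' is wrong. $s$-independence shows the objective is $\frac1n\sum_{s}\sum_{t<t'}w(t',t)\,\mathbbm{1}[\rho(t')\succ_s\rho(t)]$ for \emph{some} weights $w$. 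The minimizer of that sum changes with $w$, and the proposition asserts specific weights. You must show $w(t',t)=p(t',t)$ for every gap $t'-t$, and checking $t'-t\in\{1,2\}$ does not do that.

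\textbf{The missing step: exchangeability of the intermediate picks.} This also settles your worry about the index $i$. Let $m=|S_t|=n-t+1$ and $d=t'-t-1$, so $|S_{t+i}|=m-i$. Condition on the history through step $t$ with $s(t)=s$. Given the past, each $\rho(t+j)$ picks uniformly from $S_{t+j}$. Hence every ordered tuple $(s(t+1),\dots,s(t'-1))$ of distinct elements of $S_t\setminus\{s\}$ has the same probability $(m-1-d)!/(m-1)!$, and the tuple is independent of $\rho(t')$'s preferences. Fix a $k$-set $A$. The tuples containing $A$ number $\binom{d}{k}\,k!\,(m-1-k)!/(m-1-d)!$: choose positions, choose an order, then fill the other $d-k$ slots from the $m-1-k$ remaining objects. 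Hence
\[
\mathbb{P}\bigl[A\subseteq\{s(t+1),\dots,s(t'-1)\}\bigr]=\frac{\binom{d}{k}\,k!\,(m-1-k)!}{(m-1)!}=\frac{\binom{d}{k}\,k!}{\prod_{i=1}^{k}|S_{t+i}|}.
\]
Each fixed placement of $A$ therefore has probability $1/\prod_{i=1}^k|S_{t+i}|$, whichever $k$ intermediate positions it uses. These factors are not the step-by-step probabilities at the designated positions, because the non-designated agents must also avoid $A$. The per-step heuristic you sketch, which the paper's proof states without further justification, is only correct because of this count.

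\textbf{Finishing, with a check.} Summing $\frac1m$ times this probability over $k=0,\dots,d$ gives the displayed $p(t',t)$. As a sanity check, the sum collapses to $1/(n-t'+2)$. Indeed, $\rho(t')$ envies $s(t)$ iff $s(t)$ is his favorite in $S_{t'}\cup\{s(t)\}$. That set has $n-t'+2$ objects and is fixed by earlier agents' preferences, independent of his.
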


Now, disagreements of the form $\rho(t') \succ_s \rho(t)$ are weighted more heavily as $t' - t$ gets larger, since $\rho(t')$ becomes more likely to feel envy towards $\rho(t)$'s potential match with object $s$. Holding $t' - t$ fixed, a disagreement is also weighted more heavily when it appears towards the bottom of the ranking. For example, the last dictator $\rho(n)$ has a $\frac{1}{2}$ chance of envying $\rho(n-1)$, while $\rho(2)$ only has a $\frac{1}{n}$ chance of envying $\rho(1)$.

\bigskip
\noindent \textbf{Non-unit capacities.} For some applications (such as school choice) it is natural to assume that objects have non-unit capacities and can potentially accommodate more than one agent. How does the analysis change in many-to-one matching problems? 

First, let us generalize the model in \hyperref[section2]{Section 2} to allow for non-unit capacities. A priority-based matching problem now features a finite set of objects $S$ with $|S| = m$, allowing for $m \neq n = |I|$. Each object $s$ is endowed with a capacity, or quota, $q_s \ge 1$. In the school choice setting, $q_s$ refers to the number of seats available at school $s$. To lighten notation, let $s^k$ denote the $k$-th most preferred object in the collective (identical) preference ranking, let $S^k$ denote the set of objects occupying positions $1,\ldots,k$ in the preferences, where $|S^k| = k$, and let $\binom{S}{k}$ denote the set of all subsets of $S$ of size $k$. 

Now, when evaluating the expected number of justified envy cases arising in SD, the following case-by-case reasoning must be applied: under identical preferences, if agent $\rho(t)$ is matched to an object with unit capacity, then later agents are guaranteed to feel envy towards his match. By contrast, if $\rho(t)$ is matched to an object with multiple copies, $\rho(t+1)$ can still match with that object and will therefore not feel envy towards $\rho(t)$'s match. Extending this argument for any two positions $t < t'$, the optimal serial order now weighs each disagreement $\rho(t') \succ_{s} \rho(t)$ according to (i) how likely it is that $\rho(t)$ matches with object $s$, and (ii) how likely it is that object $s$ reaches full capacity between step $t$ and $t'$. This logic is reflected in the following generalization of \hyperref[theorem1]{Theorem 1} to many-to-one matching problems.

\begin{proposition}\label{proposition3}
    If objects have non-unit capacities and preferences are identical across agents and uniformly distributed, the serial order that minimizes the expected number of justified envy cases is a weighted Kemeny ranking, where each pairwise priority disagreement $\rho(t') \succ_s \rho(t)$ with $t < t'$ receives weight $p(s,t', t, q)$ given by
    \begin{align*}
        p(s, t', t, q) = \sum_{k = 1}^m \sum_{S' \in \binom{S \setminus \{s\}}{k - 1}} \frac{1}{m \binom{m - 1}{k - 1}} \cdot \mathbbm{1}\left[ \sum_{s' \in S'} q_{s'} < t \leq \sum_{s' \in S'} q_{s'} + q_s\right] \cdot \mathbbm{1}\left[ \sum_{s' \in S'} q_{s'} + q_s < t' \right]
    \end{align*}
\end{proposition}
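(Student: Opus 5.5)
The plan follows the decomposition in the proof of Theorem~\ref{theorem1}. First write the number of justified envy cases as a sum over ordered position pairs $t<t'$:
\begin{align*}
N\left(\mu_\rho^{\text{SD}}(P,\succ)\right)=\sum_{t<t'} \mathbbm{1}[s(t) P_{\rho(t')} s(t')]\cdot \mathbbm{1}[\rho(t')\succ_{s(t)}\rho(t)].
\end{align*}
Then condition on $s(t)=s$. Because the priorities are fixed and $\rho$ does not depend on $P$, the justified-envy indicator becomes the deterministic term $\mathbbm{1}[\rho(t')\succ_s\rho(t)]$. The expectation is therefore $\sum_{s}\sum_{t<t'} w(s,t',t)\,\mathbbm{1}[\rho(t')\succ_s\rho(t)]$, where
\begin{align*}
w(s,t',t)=\mathbb{P}[s(t)=s \wedge s(t) P_{\rho(t')} s(t')].
\end{align*}
The key point is that $w$ depends only on the positions $t,t'$ and on the object $s$, not on the identities of the agents. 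This holds because all agents share one preference ranking and SD fills positions in order. So minimizing the expectation over $\rho$ is exactly a weighted Kemeny problem, and all that remains is to show $w=p(s,t',t,q)$.

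To compute $w$, condition on the common ranking. Write $k$ for the rank of $s$ in this ranking and $S'=S^{k-1}$ for the set of objects ranked above $s$. Under identical preferences, SD gives the first $\sum_{s'\in S'}q_{s'}$ positions to objects in $S'$. The next $q_s$ positions go to $s$, and every later position goes to objects ranked below $s$. This ignores the boundary case where agents run out, which I handle below.

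\emph{First event.} Agent $\rho(t)$ receives $s$ if and only if $\sum_{S'}q_{s'}<t\le \sum_{S'}q_{s'}+q_s$.

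\emph{Second event.} Given that $\rho(t)$ receives $s$, agent $\rho(t')$ envies that match if and only if his own assignment $s(t')$ is strictly worse than $s$. That happens exactly when $s$ is full before step $t'$, i.e.\ $t'>\sum_{S'}q_{s'}+q_s$. If instead $t'$ falls within $s$'s block, he receives $s$ himself and there is no envy.

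Finally, the uniform distribution over rankings must be translated into the prefactor. Under uniformity, the rank $k$ of $s$ is uniform on $\{1,\dots,m\}$, and conditional on $k$, the set $S'$ is uniform over $\binom{S\setminus\{s\}}{k-1}$. This gives the joint probability $\frac{1}{m\binom{m-1}{k-1}}$. Summing the product of the two indicators against these probabilities yields exactly $p(s,t',t,q)$.

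The main obstacle I expect is the boundary behaviour when $\sum_s q_s\neq n$.

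\emph{Too few seats.} If agents run out of seats, later agents are unmatched. An unmatched agent envies any object, and this is consistent with the second indicator, since $t'$ then lies beyond every block. I would state the convention that all objects are acceptable so that this holds.

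\emph{Too few agents.} If $n$ is small, some positions are never reached. The indicators are then automatically zero because $t'\le n$.

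I would check both cases explicitly, and also confirm that the case $t,t'$ both inside $s$'s block is correctly excluded.
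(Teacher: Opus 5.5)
Your proposal is correct and follows essentially the same argument as the paper. You decompose by position pairs as in Theorem~1, condition on the rank $k$ of $s$ and the set $S'$ of objects ranked above it, and read off the two capacity indicators with prefactor $\frac{1}{m\binom{m-1}{k-1}}$. The only differences are cosmetic: you compute the joint probability $\mathbb{P}[s(t)=s \wedge s(t) P_{\rho(t')} s(t')]$ directly, whereas the paper computes the conditional envy probability and divides by $\mathbb{P}[s(t)=s]$, which then cancels. You also check the boundary cases $\sum_s q_s \neq n$ explicitly, which the paper leaves implicit.
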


It is easy to verify that \hyperref[proposition3]{Proposition 3} nests \hyperref[theorem1]{Theorem 1} as a special case. If we impose unit capacities $q_s = 1$ for all $s$, and $m = n$, then the indicators collapse to $k = t$ and the above expression simplifies to $p(s, t', t, q) = \frac{1}{n}$ for all $t < t'$. Hence, all priority disagreements receive the same weight, and the optimal serial order reduces to the (unweighted) Kemeny ranking characterized in \hyperref[theorem1]{Theorem 1}.


\section{Discussion}\label{section5}

SD is one of the most canonical allocation mechanisms in economics. For decades, it has been valued by theorists and practitioners for its simplicity, strategyproofness, and efficiency. Yet its failure to eliminate justified envy in settings with multiple priority orderings is regarded as one of its biggest limitations. This paper shows that by optimally selecting the serial order as a function of objects' priorities, SD can be made as fair as possible while preserving its other properties. 

Conceptually, the paper's main contribution is a novel connection between the matching literature and the literature on social choice and rank aggregation. Specifically, I show that if preferences are identical across agents and uniformly distributed, and objects have unit capacities, the optimal serial order coincides exactly with the Kemeny ranking of agents' priorities. More generally, if any of these assumptions are relaxed, the optimal serial order corresponds to a weighted Kemeny ranking.

\bigskip
\noindent \textbf{Relationship with RSD.} In allocation problems where objects have no priority orderings (so-called ``house allocation'' problems), RSD is fair in an ex ante sense by satisfying \textit{equal treatment of equals}: whenever two agents have identical preferences, they receive the same allocation in expectation. It is precisely this symmetric treatment of agents, however, that makes RSD suboptimal in priority-based matching, where fairness is evaluated ex post by the extent of justified envy. 

From this perspective, the optimal SD mechanisms described in this paper may be viewed as generalizations of RSD to environments with priorities. Indeed, when objects have no priorities (equivalently, when each object's priority is the indifference relation over all agents), the set of (weighted) Kemeny rankings coincides with the entire set of $n!$ permutations. By selecting one of these Kemeny rankings uniformly at random, the associated SD mechanism (Kemeny SD) becomes equivalent to RSD. In this sense, Kemeny SD inherits RSD's equal treatment of equals property in the absence of priorities, and minimizes expected justified envy when priorities are non-trivial.

\bigskip
\noindent \textbf{Computational considerations.} In practice, implementing SD with an optimal serial order requires the planner to find a (weighted) Kemeny ranking of agents' priorities---a task which is known to be NP-hard \citep{bartholdi1989voting}. For small matching markets, this remains feasible. However, for applications involving large markets, one needs to look for tractable mechanisms that achieve approximately optimal levels of expected justified envy. 

The computational social choice literature has identified several algorithms to approximate Kemeny rankings, two of which stand out for their simplicity and familiarity: Borda's rule and Copeland's rule, as described at the end of \hyperref[section3]{Section 3} \citep{coppersmith2006ordering, fagin2016algorithmic}. Practitioners can therefore consider ordering agents in SD according to Borda's rule (Borda SD) or Copeland's rule (Copeland SD). Under the conditions of \hyperref[theorem1]{Theorem 1}, the performance bounds of these alternative SD mechanisms can be precisely stated relative to Kemeny SD: If preferences are identical across agents and uniformly distributed, Borda SD has at most five times as many expected justified envy cases as Kemeny SD \citep[][Corollary 3.3]{coppersmith2006ordering}, and Copeland SD has at most four times as many expected justified envy cases as Kemeny SD \citep[][Theorem 4.10]{fagin2016algorithmic}.

\bigskip
\noindent \textbf{Open questions.} To insist on strategyproofness, the planner does not condition her choice of serial order on agents' reported preferences. Without knowledge of the preference profile, justified envy is evaluated in expected terms, and optimality of the mechanism depends on the underlying distribution of preferences. 

One possible alternative is to adopt a distributionally robust perspective. For example, if information about preferences is limited or imprecise, the planner can consider a worst-case approach which assumes that once a serial order is chosen, nature selects preferences to maximize the number of justified envy cases. Another alternative is to allow some features of the serial order to depend on agents' preferences in a way that does not violate strategyproofness. One such avenue is to consider \textit{sequential dictatorships} \citep{papai2001strategyproof, ehlers2003coalitional}, where dictators can be chosen as a function of earlier agents' matches. These extensions are left for future work.

\newpage
\section{Appendix}

\noindent \textbf{Proof of Proposition 2}
    
\begin{proof}
    To adapt the proof of \hyperref[theorem1]{Theorem 1}, we derive new expressions for $\mathbb{P}[s(t) = s]$ and $\mathbb{P}[s(t) P_{\rho(t')} s(t') | s(t) = s]$, under independently and uniformly distributed preferences.
    
    First, we show that $\mathbb{P}[s(t) = s] = \frac{1}{n}$. Fix any $\rho$, and pick some object $s \in S$. There is a $\frac{1}{n}$ chance that the first dictator, $\rho(1)$, picks $s$ as his top choice among $S$. The second dictator, $\rho(2)$, is matched to $s$ if $\rho(1)$ does not pick $s$ in $S_1$ (which occurs with probability $\frac{n -1}{n}$) and if $s$ is $\rho(2)$'s top choice in $S_2$ (which occurs with probability $\frac{1}{n - 1}$). By independence of preferences, the joint event occurs with probability $\frac{n-1}{n} \frac{1}{n-1} = \frac{1}{n}$. For $\rho(3)$, we get $\frac{n - 1}{n} \frac{n - 2}{n - 1} \frac{1}{n - 2} = \frac{1}{n}$. The expression for any $t$ and any $s$ is given by
    \begin{align*}
        \mathbb{P}[s(t) = s] = \left( \prod_{i = 1}^{t - 1} \frac{n - i}{n - i + 1} \right) \frac{1}{n - (t - 1)} = \frac{1}{n}
    \end{align*}  
    Next, we derive $p(t', t) = \mathbb{P}[s(t) P_{\rho(t')} s(t') | s(t) = s]$ directly (conditioning on $s(t) = s$ can be dropped since $\mathbb{P}[s(t) = s] = \frac{1}{n}$). Envy can be decomposed as follows. Fixing a rank $k = 0,\ldots,t' - t - 1$, agent $\rho(t')$ feels envy towards $\rho(t)$ if (i) $s(t)$ is $\rho(t')$'s $(k+1)$-th most-preferred object in $S_t$ (which occurs with probability $\frac{1}{|S_t|}$) and if (ii) the top $1,\ldots,k$-th most-preferred objects in $S_t$ (if any) are assigned to intermediate agents between $t$ and $t'$. There are $\binom{t'-t-1}{k}$ ways to choose which $k$ of these $t'-t-1$ agents receive these objects, and $k!$ ways to assign the $k$ objects to the selected agents in some order. For any fixed assignment and order, the probability that agent $\rho(t+i)$ selects the designated object equals $\frac{1}{|S_{t+i}|}$ (conditional on the object being available in $\rho(t+i)$'s choice set, he chooses randomly). The probability of any such ordered assignment equals $\frac{1}{\prod_{i=1}^k |S_{t+i}|}$. For a fixed $k \ge 1$, event (ii) occurs with probability 
    \begin{align*}
        \frac{\binom{t'-t-1}{k}\,k!}{\prod_{i=1}^k |S_{t+i}|}
    \end{align*}
    Summing the joint probability of events (i) and (ii) over all $k = 0,\ldots,t'-t-1$ yields

    \begin{align*}
        p(t',t) = \frac{1}{|S_t|} \left( 1 + \sum_{k = 1}^{t' - t - 1} \frac{\binom{t' - t - 1}{k} \ k!}{\prod_{i = 1}^k |S_{t + i}|} \right)
    \end{align*}
    Repeating the proof of \hyperref[theorem1]{Theorem 1} with these new expressions yields
    \begin{align*}
    \operatorname*{argmin}_{\rho \in \mathcal{R}_n} \E_P \left[N\left(\mu_\rho^{\text{SD}}(P, \succ)\right)\right] = \operatorname*{argmin}_{\rho \in \mathcal{R}_n} \sum_{s \in S} \sum_{t < t'} p(t',t) \cdot \mathbbm{1}[\rho(t') \succ_s \rho(t)]
    \end{align*}
\end{proof}

\noindent \textbf{Proof of Proposition 3}

\begin{proof}
The goal is once again to derive a new expression for $\mathbb{P}[s(t) P_{\rho(t')} s(t') | s(t) = s]$. Since preferences are identical, agent $\rho(t')$ will be envious of an earlier match $s(t) = s$ whenever $s(t') \neq s(t)$, so $\mathbb{P}[s(t) P_{\rho(t')} s(t') | s(t) = s] = \mathbb{P}[s(t') \neq s(t) | s(t) = s]$, and this event is entirely determined by the remaining capacity of object $s$ at step $t'$ of SD (which itself depends on the capacity of object $s$ when it was assigned to agent $\rho(t)$). 
\allowdisplaybreaks
\begin{align*}
    &\mathbb{P}[s(t') \neq s(t) | s(t) = s] \\
    &= \sum_{k = 1}^m \sum_{S' \in \binom{S \setminus \{s\}}{k - 1}} \mathbb{P}[s^k = s, S^{k - 1} = S'| s(t) = s] \cdot \mathbb{P}[s(t') \neq s(t) | s^k = s, S^{k -1} = S', s(t) = s] \\
    &= \sum_{k = 1}^m \sum_{S' \in \binom{S \setminus \{s\}}{k - 1}} \frac{\mathbb{P}[s^k = s, S^{k - 1} = S', s(t) = s]}{\mathbb{P}[s(t) = s]} \cdot \mathbb{P}[s(t') \neq s(t) | s^k = s, S^{k -1} = S', s(t) = s] \\
    &= \sum_{k = 1}^m \sum_{S' \in \binom{S \setminus \{s\}}{k - 1}}  \frac{\frac{1}{m} \frac{1}{\binom{m - 1}{k - 1}} \mathbbm{1}\left[\sum_{s' \in S'} q_{s'} < t \leq \sum_{s' \in S'} q_{s'} + q_s \right]}{\mathbb{P}[s(t) = s]} \cdot \mathbbm{1}\left[\sum_{s' \in S'} q_{s'} + q_s < t'\right]
\end{align*}
Although $\frac{1}{\mathbb{P}[s(t) = s]}$ eventually cancels out, it is instructive to derive its exact expression. Using the law of total probability, we can evaluate the event $s(t) = s$ conditional on object $s$ being the $k$-th most-preferred object according to agents' preferences.
\begin{align*}
    \mathbb{P}[s(t) = s] 
    &= \sum_{k = 1}^m \mathbb{P}[s^k = s] \cdot \mathbb{P}[s(t) = s | s^k = s]\\
    &= \sum_{k = 1}^m \mathbb{P}[s^k = s] \cdot \mathbb{P}[s(t) \neq s^1, \ldots, s^{k-1} | s^k = s] 
    \cdot \mathbb{P}[s(t) = s^k | s^k = s] \\
    &= \frac{1}{m} \sum_{k = 1}^m \sum_{S' \in \binom{S \setminus \{s\}}{k-1}} \mathbb{P}[S^{k - 1} = S'] \cdot \mathbb{P}\left[s(t) \notin S^{k - 1} | S^{k - 1} = S' \right] \cdot \mathbb{P}[s(t) = s^k | S^{k - 1} = S'] \\
    &= \frac{1}{m} \sum_{k = 1}^m \sum_{S' \in \binom{S \setminus \{s\}}{k - 1}} \frac{1}{\binom{m - 1}{k - 1}} \cdot \mathbbm{1}\left[\sum_{s' \in S'} q_{s'} < t \right] \cdot \mathbbm{1}\left[q_{s} \ge t - \sum_{s' \in S'} q_{s'} \right] \\
    &= \frac{1}{m} \sum_{k = 1}^m \sum_{S' \in \binom{S \setminus \{s\}}{k - 1}} \frac{1}{\binom{m - 1}{k - 1}} \cdot \mathbbm{1}\left[t - q_s \leq \sum_{s' \in S'} q_{s'} < t \right] 
\end{align*} 
Where conditioning on $s^k = s$ is reflected by the domain of the second summand, $S' \in \binom{S \setminus \{s\}}{k -1}$. With unit capacities, the indicator reads $t - 1 \leq k - 1 < t \iff t \leq k < t + 1$ which is only satisfied for $k = t$. More generally, with common capacities, $ 1 \leq q = q_s, \forall s \in S$, the indicator reads $t - q \leq (k - 1)q < t \iff \frac{t}{q} \leq k < \frac{t}{q} + 1$, which only holds for a single value of $k$ (call it $k^*$), and we get
\begin{align*}
    \mathbb{P}[s(t) = s] = \frac{1}{m} \sum_{S' \in \binom{S \setminus \{s\}}{k^* - 1}} \frac{1}{\binom{m - 1}{k^* - 1}} = \frac{1}{m} \frac{\binom{m - 1}{k^* - 1}}{\binom{m - 1}{k^* - 1}} = \frac{1}{m}
\end{align*}
This matches the intuition that if objects have the same capacity and preferences are independent and random, every agent-object pair is equally likely to form a match. 

We can also verify that under unit capacities, the probability of envy should equal one. Evaluating $\mathbb{P}[s(t') \neq s(t) | s(t) = s]$ under unit capacities, the first indicator reads $k - 1 < t \leq k$ (which holds for $k = t$), while the second indicator reads $1 < t' - k + 1$ (which holds for $k < t'$). Plugging in $\mathbb{P}[s(t) = s] = \frac{1}{m}$, we get
\begin{align*}
    \mathbb{P}[s(t') \neq s(t) | s(t) = s] = \sum_{S' \in \binom{S \setminus \{s\}}{t - 1}}  \frac{\frac{1}{m} \frac{1}{\binom{m - 1}{t - 1}}}{\frac{1}{m}} = \frac{\binom{m - 1}{t - 1}}{\binom{m - 1}{t - 1}} = 1
\end{align*}
Repeating the proof of \hyperref[theorem1]{Theorem 1} with these new expressions yields
\begin{align*}
    &\operatorname*{argmin}_{\rho \in \mathcal{R}_n} \E_P \left[ N\left(\mu_\rho^{\text{SD}}(P, \succ)\right) \right] = \operatorname*{argmin}_{\rho \in \mathcal{R}_n} \sum_{s \in S} \sum_{t < t'} p(s, t', t, q) \cdot \mathbbm{1}[\rho(t') \succ_s \rho(t)]
\end{align*}
Where the weights $p(s, t', t, q)$ are equal to
\begin{align*}
        p(s, t', t, q) = \sum_{k = 1}^m \sum_{S' \in \binom{S \setminus \{s\}}{k - 1}} \frac{1}{m \binom{m - 1}{k - 1}} \cdot \mathbbm{1}\left[ \sum_{s' \in S'} q_{s'} < t \leq \sum_{s' \in S'} q_{s'} + q_s\right] \cdot \mathbbm{1}\left[ \sum_{s' \in S'} q_{s'} + q_s < t' \right]
    \end{align*}
\end{proof}

\bibliographystyle{plainnat}
\bibliography{bib}

\end{document}